

\documentclass[aps,pra,groupedaddress,showpacs,twocolumn,superscriptaddress]{revtex4-1}

\usepackage{hyperref}
\usepackage{amsmath}
\usepackage{amssymb}
\usepackage{amsthm}
\usepackage{graphics}
\usepackage{graphicx}

 
\newcommand{\ket}[1]{|#1\rangle}               
\newcommand{\bra}[1]{\langle #1|}              
\newcommand{\dyad}[2]{\ket{#1}\bra{#2}}        
\newcommand{\ip}[2]{\langle #1|#2\rangle}      

\newcommand{\ii}{\mathrm{i}}
\newcommand{\ee}{\mathrm{e}}
\newcommand{\expo}[1]{\mathrm{e}^{#1}} 


\newtheorem{theorem}{Theorem}
\newtheorem{lemma}{Lemma}

\newtheorem*{QGS}{Quadratic Gauss Sums}
\newtheorem*{LS}{Landsberg-Schaar Identity}
\newtheorem*{GQGS}{Generalized Quadratic Gauss Sums}
\newtheorem*{RFGQGS}{Reciprocity Formula for Generalized Quadratic Gauss Sums}

\long\def\ca#1\cb{} 	



\newcommand{\HC}{\mathcal{H}}

\newcommand{\CP}{\mathrm{CP}}

\begin{document}

\title{Construction of Equientangled Bases in Arbitrary Dimensions via \mbox{Quadratic Gauss Sums} and \mbox{Graph States}}

\author{Vlad Gheorghiu}
\email{vgheorgh@andrew.cmu.edu}
\affiliation{Department of Physics, Carnegie Mellon University, Pittsburgh,
Pennsylvania 15213, USA}

\author{Shiang Yong Looi}
\email{slooi@andrew.cmu.edu}
\affiliation{Department of Physics, Carnegie Mellon University, Pittsburgh,
Pennsylvania 15213, USA}

\date{Version of June 26, 2010}

\begin{abstract}
Recently Karimipour and Memarzadeh [Phys. Rev. A \textbf{73}, 012329 (2006)] studied the problem of finding a  family of orthonormal bases in a bipartite space each of dimension $D$ with the following properties: (i) The family continuously interpolates between a product basis and a maximally entangled basis as some parameter $t$ is varied, and (ii) for a fixed $t$, all basis states have the same amount of entanglement. The authors derived a necessary condition and provided explicit solutions for $D \leqslant 5$ but the existence of a solution for arbitrary dimensions remained an open problem. We prove that such families exist in arbitrary dimensions by providing two simple solutions, one employing the properties of quadratic Gauss sums and the other using graph states. The latter can be generalized to more than two parties -- we can construct equally entangled families that vary continuously between a product basis and a graph state basis.
\end{abstract}

\pacs{03.67.Mn, 03.67.Hk}
\maketitle

\section{Introduction\label{sct1}}
We present two different solutions to the problem posed by Karimipour and Memarzadeh in \cite{PhysRevA.73.012329} of constructing an orthonormal basis of two qudits with the following properties: (i) The basis continuously changes from a product basis (every basis state is a product state) to a maximally entangled basis (every basis state is maximally entangled), by varying some parameter $t$, and (ii) for a fixed $t$, all basis states are equally entangled. As mentioned in \cite{PhysRevA.73.012329}, such a family of bases may find applications in various quantum information protocols including quantum cryptography, optimal Bell tests, investigation of the enhancement of channel capacity due to entanglement and the study of multipartite entanglement. For a more detailed motivation the interested reader may consult \cite{PhysRevA.73.012329}.

The paper is organized as follows : In Sec.~\ref{sct2} we summarize the main results of \cite{PhysRevA.73.012329} and then introduce the concept of Gauss sums and some useful related properties. Next we provide an explicit parametrization of a family of equientangled bases and we prove that it interpolates continuously between a product basis and a maximally entangled basis, for all dimensions. We illustrate the behaviour of our solution with explicit examples. In Sec.~\ref{sct3} we construct another such family using a completely different method based on graph states, describe a simple extension of it to multipartite systems, and then illustrate its behaviour with specific examples. Finally in Sec.~\ref{sct4} we compare the two solutions and make some concluding remarks.

\section{Construction based on Gauss sums\label{sct2}}
\subsection{Summary of previous work\label{sct2A}}
Let us start by summarizing the main results of \cite{PhysRevA.73.012329}. Consider a bipartite Hilbert space $\HC\otimes\HC$, where both Hilbert spaces have the same dimension $D$. The authors first defined an arbitrary normalized bipartite state
\begin{equation}
\label{eqn1}
\ket{\psi_{0,0}}=\sum_{k=0}^{D-1}a_k\ket{k}\ket{k}.
\end{equation}
Next for $m,n=0,1,\ldots, D-1$, they considered the collection of $D^2$ ``shifted" states
\begin{align}
\ket{\psi_{m,n}} &=X^{m}\otimes X^{m+n}\ket{\psi_{0,0}} \notag\\
&=\sum_{k=0}^{D-1}a_k\ket{k\oplus m}\ket{k\oplus m\oplus n}, \label{eqn2}
\end{align}
where 
\begin{equation}\label{eqn3}
X:=\sum_{k=0}^{D-1}\dyad{k\oplus 1}{k}
\end{equation}
is the generalized Pauli (or shift) operator and $\oplus$ denotes addition modulo $D$. 
They noted that all states have the same value of entropy of entanglement \cite{NielsenChuang:QuantumComputation} given by the von-Neumann entropy
\begin{equation}\label{eqn4}
E(\ket{\psi_{m,n}})=E(\ket{\psi_{0,0}})=-\sum_{k=0}^{D-1}|a_k|^2\log_D|a_k|^2,
\end{equation}
where the logarithm is taken in base $D$ for normalization reasons so that all maximally entangled states have entanglement equal to one regardless of $D$. 

Demanding the states in \eqref{eqn2} be orthonormal yields
\begin{equation}\label{eqn5}
\sum_{k=0}^{D-1} (a_k)^{*}a_{k\oplus m}=\delta_{m,0},\text{ }\forall m=0,\ldots,D-1,
\end{equation}
and the authors proved (see their Eqn. (36)) that \eqref{eqn5} is satisfied if and only if the coefficients $a_k$ have the form
\begin{equation}\label{eqn6}
a_k=\frac{1}{D}\sum_{j=0}^{D-1}\ee^{\ii \theta_j}\omega^{kj},
\end{equation}
where $\theta_j$ are arbitrary real parameters and $\omega=\expo{2\pi\ii/D}$ is
 the $D$-th root of unity. 
 
Therefore the authors found a family of $D^2$ orthonormal states, all having the same Schmidt coefficients and hence the same value of entanglement.  To ensure it interpolates from a product basis to a maximally entangled basis, it is sufficient to find a set of parameters $\{\theta^0_j\}_{j=0}^{D-1}$ for which the magnitude of $a_k$ is  $|a_k|=1/\sqrt{D}$ for all $k$. Then the problem is solved by defining
\begin{equation}\label{eqn7}
a_k(t):=\frac{1}{D}\sum_{j=0}^{D-1}\ee^{\ii t\theta^0_j}\omega^{kj},
\end{equation}
where $t\in[0,1]$ is a real parameter. When $t=0$ we have $a_k=\delta_{k,0}$ so the basis states are product states and when $t=1$, the basis is maximally entangled by assumption. We also observe there is a continuous variation in between these two extremes as a function of $t$. 

Karimipour and Memarzadeh considered the existence of such a set $\{\theta^0_j\}_{j=0}^{D-1}$ in arbitrary dimensions (see the last paragraph of Sec. V in \cite{PhysRevA.73.012329}). They found particular solutions for $D\leqslant 5$, but did not find a general solution for arbitrary $D$.

\subsection{Quadratic Gauss Sums\label{sct2B}}
We now define the basic mathematical tools we will make use of in the rest of this section. The most important concept is that of a \emph{quadratic Gauss sum}, defined below.
\begin{QGS}
Let $p,m$ be positive integers. The quadratic Gauss sum is defined as
\begin{equation}\label{eqn8}
\sum_{j=0}^{p-1}\expo{2\pi\ii j^2 m/p}.
\end{equation}
\end{QGS}
The quadratic Gauss sums satisfy a reciprocity relation known as 
\begin{LS}
Let $p,m$ be positive integers. Then
\begin{equation}\label{eqn9}
\frac{1}{\sqrt{p}}\sum_{j=0}^{p-1}
\expo{2\pi\ii j^2m/p}=
\frac{\ee^{\pi\ii/4}}{\sqrt{2m}}\sum_{j=0}^{2m-1}\expo{-\pi\ii j^2 p/2m}
\end{equation}
\end{LS}
The quadratic Gauss sums can be generalized as follows.
\begin{GQGS}
Let $p,m,n$ be positive integers. The generalized quadratic Gauss sum is defined as
\begin{equation}\label{eqn10}
\sum_{j=0}^{p-1}\expo{2\pi\ii( j^2 m + j n )/p}.
\end{equation}
\end{GQGS}
Finally the following reciprocity formula for generalized Gauss sums holds.
\begin{RFGQGS}
Let $p,m,n$ be positive integers such that $mp\neq0$ and $mp+n$ is even. Then
\begin{align}\label{eqn11}
\frac{1}{\sqrt{p}}&\sum_{j=0}^{p-1}\expo{\pi\ii(j^2m+jn)/p}=\notag\\
&=\expo{\pi\ii(mp-n^2)/4mp}\frac{1}{\sqrt{m}}\sum_{j=0}^{m-1}\expo{-\pi\ii(j^2p+jn)/m}.
\end{align}
\end{RFGQGS}
The definitions of the Gauss sums \eqref{eqn8} and \eqref{eqn10} as well as the Landsberg-Schaar's identity \eqref{eqn9} can be found in standard number theory books \cite{HardyWright:IntroTheoryNumbers,EverestWard:IntroNumberTheory,Nathanson:ElementaryMethodsNumberTheory}. The reciprocity formula for the generalized quadratic Gauss sum is not as well-known, and can be found in \cite{BerndtEvans81}.

\subsection{Explicit Solution\label{sct2C}}
We now show that a family of equientangled bases that interpolates continuously between the product basis and the maximally entangled basis exists for all dimensions $D$, as summarized by the following Theorem.
\begin{theorem}\label{thm1}
The collection of $D^2$ normalized states
\begin{align}\label{eqn12}
\ket{\psi_{m,n}(t)}&=\sum_{k=0}^{D-1}a_k(t)\ket{k\oplus m}{\ket{k\oplus m\oplus n}},\\
m,n &= 0,\ldots, D-1,\notag
\end{align}
 indexed by a real parameter 
\mbox{$t\in[0,1]$} with
\begin{equation}\label{eqn13}
a_k(t)=\frac{1}{D}\sum_{j=0}^{D-1}\ee^{\ii t\theta^0_j}\omega^{kj},
\qquad \omega=\ee^{2\pi\ii/D},
\end{equation}
with the particular choice of
\begin{equation}\label{eqn14}
\theta^0_j = \left\{ 
\begin{array}{l l}
  \pi  j^2/D & \quad \text{if $D$ is even}\\
  2\pi j^2/D & \quad \text{if $D$ is odd},
\end{array} \right.
\end{equation}
defines a family of equientangled bases that continuously interpolates between a product basis at $t=0$ and  a maximally entangled basis at $t=1$.
\end{theorem}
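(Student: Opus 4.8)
The plan is to dispatch orthonormality, equal entanglement, and continuity first---these come essentially for free from the summarized results of Sec.~\ref{sct2A}---and then to concentrate all the real work on the two endpoints, where the only nontrivial point is evaluating $|a_k(1)|$. For every fixed $t\in[0,1]$ the coefficients in \eqref{eqn13} are exactly of the form \eqref{eqn6} with the \emph{real} parameters $\theta_j=t\,\theta^0_j$, so the orthonormality condition \eqref{eqn5} holds; as recalled in Sec.~\ref{sct2A} this is equivalent to the $D^2$ states $\ket{\psi_{m,n}(t)}$ being orthonormal, and the $m=0$ instance in particular gives $\sum_k|a_k(t)|^2=1$, so $\{\ket{\psi_{m,n}(t)}\}_{m,n}$ is an orthonormal basis of $\HC\otimes\HC$. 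By \eqref{eqn4} all $D^2$ states share the Schmidt spectrum $\{|a_k(t)|\}_k$ and hence the common entanglement $E(t)=-\sum_k|a_k(t)|^2\log_D|a_k(t)|^2$; since each $a_k(t)$ is a finite sum of exponentials it depends continuously (indeed smoothly) on $t$, and therefore so does $E(t)$.

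It remains to identify the two ends. At $t=0$ one has $a_k(0)=\tfrac1D\sum_{j=0}^{D-1}\omega^{kj}=\delta_{k,0}$ by orthogonality of the characters, so $\ket{\psi_{m,n}(0)}=\ket{m}\ket{m\oplus n}$ is a product basis and $E(0)=0$. At $t=1$ I claim $|a_k(1)|=1/\sqrt D$ for every $k$, which forces every Schmidt coefficient to equal $1/\sqrt D$, so each $\ket{\psi_{m,n}(1)}$ is maximally entangled and $E(1)=1$. Continuity of $E(t)$ together with $E(0)=0$ and $E(1)=1$ then shows the family genuinely interpolates (by the intermediate value theorem it realizes every entanglement value in $[0,1]$), which completes the proof modulo the claim.

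To prove the claim I invoke the reciprocity formula \eqref{eqn11} for generalized quadratic Gauss sums. For $D$ odd, \eqref{eqn14} gives $a_k(1)=\tfrac1D\sum_{j=0}^{D-1}\expoi{2\pi(j^2+kj)/D}=\tfrac1D\sum_{j=0}^{D-1}\expo{\pi\ii(2j^2+(2k)j)/D}$, a generalized quadratic Gauss sum \eqref{eqn10} corresponding to the choice $(p,m,n)=(D,2,2k)$ in \eqref{eqn11}; the hypotheses $mp\neq0$ and $mp+n=2D+2k$ even are met, and \eqref{eqn11} rewrites the sum as a unit-modulus factor times $\tfrac{1}{\sqrt2}\bigl(1+\expo{-\pi\ii(D+2k)/2}\bigr)$. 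Since $D$ is odd, $\expo{-\pi\ii(D+2k)/2}=\pm\ii$, so this bracket has modulus $\sqrt2$ and $|a_k(1)|=\tfrac1D\sqrt D=1/\sqrt D$. For $D$ even, \eqref{eqn14} gives $a_k(1)=\tfrac1D\sum_{j=0}^{D-1}\expo{\pi\ii(j^2+(2k)j)/D}$, now with $(p,m,n)=(D,1,2k)$; again $mp\neq0$ and $mp+n=D+2k$ is even because $D$ is even, and \eqref{eqn11} collapses the dual sum to its single $j=0$ term, leaving only a unit-modulus prefactor, so once more $|a_k(1)|=1/\sqrt D$. (The $k=0$ instances are the classical quadratic Gauss sum, also handled by the Landsberg--Schaar identity \eqref{eqn9}.)

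The only genuinely delicate step is this last one: matching $a_k(1)$ precisely to the template \eqref{eqn10}, checking the parity and non-vanishing conditions demanded by \eqref{eqn11}, and evaluating the resulting short dual sum; everything else is bookkeeping. For $D$ odd one could alternatively complete the square modulo $D$---legitimate because $2$ is invertible mod $D$---to reduce $\sum_j\omega^{j^2+kj}$ directly to the classical Gauss sum of modulus $\sqrt D$, but the reciprocity formula \eqref{eqn11} disposes of both parities in one stroke and is the cleaner route.
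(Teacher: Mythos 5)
Your proposal is correct and follows essentially the same route as the paper: the orthonormality, equal-entanglement, continuity, and $t=0$ statements are dispatched exactly as in Sec.~\ref{sct2A}, and the key step $|a_k(1)|=1/\sqrt{D}$ is obtained from the reciprocity formula \eqref{eqn11} with the same parameter choices ($p=D$, $m=1$, $n=2k$ for even $D$ and $p=D$, $m=2$, $n=2k$ for odd $D$) used in the paper's Lemma~\ref{lma1}. The only cosmetic difference is that you stop at the modulus of $a_k(1)$ rather than recording its full phase as in \eqref{eqn15}, which is all the theorem needs.
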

That \eqref{eqn12} defines a family of equientangled bases that consists of a product basis at $t=0$ follows directly from the remarks of Sec.~\ref{sct1}, $a_k(0)=\delta_{k,0}$. 
Next note that a continuous variation of $t$ in the interval $[0,1]$ corresponds to a continuous variation of the Schmidt coefficients of the states in the basis. The latter implies that no matter which measure one uses to quantify the entanglement, the measure will vary continuously with $t$, since any pure state entanglement measure depends only on the Schmidt coefficients of the state \cite{PhysRevLett.83.1046}.

The only thing left to show is that the basis states in Theorem~\ref{thm1} are maximally entangled when $t=1$, or, equivalently, that $|a_k(1)|=1/\sqrt{D}$ for all $k$. We prove this by explicitly evaluating the value of $a_k(1)$ in the following Lemma.
\begin{lemma}\label{lma1}
Let $a_k(t)$ and $\{\theta_j^0\}_{j=0}^{D-1}$ be as defined by Theorem~\ref{thm1}. Then for all $k$
\begin{equation}\label{eqn15}
a_k(1) = 
\frac{\expo{\pi\ii/4}}{\sqrt{D}}\times 
\left\{ 
\begin{array}{l l}
   \omega^{-k^2/2}, & \quad \text{if $D$ is even}\\
  \omega^{-k^2/4} \left(\frac{1-\ii^{2k+D}}{\sqrt{2}} \right), & \quad \text{if $D$ is odd}
\end{array} \right..
\end{equation}
\end{lemma}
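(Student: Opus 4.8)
The plan is to substitute the explicit choice \eqref{eqn14} into \eqref{eqn13} at $t=1$, recognize the resulting finite sum as a generalized quadratic Gauss sum, and apply the reciprocity formula \eqref{eqn11} with a modulus that makes the dual sum trivial. The key observation is that one can write the summand $\ee^{\ii\theta_j^0}\omega^{kj}$ with an exponent of the form $\expo{\pi\ii(j^2 M + 2kj)/D}$, where $M=1$ for even $D$ and $M=2$ for odd $D$; then \eqref{eqn11} applies with $p=D$, $m=M$, $n=2k$, and its right-hand side runs over only $j\in\{0\}$ (even $D$) or $j\in\{0,1\}$ (odd $D$), so everything collapses to an explicit phase.

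For $D$ even one has $a_k(1)=\tfrac1D\sum_{j=0}^{D-1}\expo{\pi\ii(j^2+2kj)/D}$. The hypothesis of \eqref{eqn11} requires $mp+n=D+2k$ to be even, which holds exactly because $D$ is even; the dual sum reduces to the single term $j=0$, equal to $1$, so the right-hand side of \eqref{eqn11} is just the prefactor $\expo{\pi\ii(D-4k^2)/4D}=\expo{\pi\ii/4}\,\omega^{-k^2/2}$. Multiplying by $\sqrt D/D$ yields the first line of \eqref{eqn15}. For $D$ odd one has $a_k(1)=\tfrac1D\sum_{j=0}^{D-1}\expo{\pi\ii(2j^2+2kj)/D}$, and \eqref{eqn11} with $m=2$ applies because $mp+n=2D+2k$ is automatically even. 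Now the dual sum has two terms: $j=0$ contributes $1$ and $j=1$ contributes $\expo{-\pi\ii(D+2k)/2}$, which, using that $D$ is odd, simplifies to $(-\ii)^D(-1)^k=-\ii^{2k+D}$, so the dual sum equals $1-\ii^{2k+D}$. Combining with the prefactor $\expo{\pi\ii(2D-4k^2)/8D}=\expo{\pi\ii/4}\,\omega^{-k^2/4}$, the normalization $1/\sqrt2$ from \eqref{eqn11}, and the factor $\sqrt D/D$, one gets the second line of \eqref{eqn15}.

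The only step requiring care is the phase bookkeeping: checking the parity hypothesis ``$mp+n$ even'' of \eqref{eqn11} in each case, and reducing $\expo{-\pi\ii(D+2k)/2}$ to the compact form $-\ii^{2k+D}$ via the oddness of $D$. I do not expect a substantive obstacle here — once the sum is cast as the left-hand side of \eqref{eqn11}, the reciprocity formula immediately does all the work since the dual modulus is $1$ or $2$. As a consistency check one verifies $|a_k(1)|=1/\sqrt D$ for every $k$ directly from \eqref{eqn15} (in the odd case $|1-\ii^{2k+D}|=\sqrt 2$ because $2k+D$ is odd, so $\ii^{2k+D}=\pm\ii$), which is exactly the statement needed to complete the proof of Theorem~\ref{thm1}.
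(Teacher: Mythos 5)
Your proposal is correct and follows essentially the same route as the paper: casting $a_k(1)$ as a generalized quadratic Gauss sum, applying the reciprocity formula \eqref{eqn11} with exactly the same parameters ($m=1,n=2k,p=D$ for even $D$ and $m=2,n=2k,p=D$ for odd $D$), and reducing the one- or two-term dual sum to the stated phases. The parity checks and the simplification $\expo{-\pi\ii(D+2k)/2}=-\ii^{2k+D}$ for odd $D$ match the paper's argument, so there is nothing to add.
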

Lemma~\ref{lma1} implies at once that $|a_k(1)|=1/\sqrt{D}$, and therefore proves Theorem~\ref{thm1}. 
\begin{proof}
(of Lemma~\ref{lma1}) Note first that the expression for $a_k(1)$ in \eqref{eqn13} with $\theta_j^0$ defined in \eqref{eqn14} resembles the generalized quadratic Gauss sum \eqref{eqn10}. We will use the  reciprocity formula \eqref{eqn11} to prove Lemma~\ref{lma1}. There are two cases to be considered: Even $D$ and odd $D$.

\textbf{Even $D$}. Note that one can rewrite $a_k(1)$ in \eqref{eqn13} with $\theta_j^0$ defined in \eqref{eqn14} as
\begin{equation}\label{eqn16}
a_k(1)=\frac{1}{D}\sum_{j=0}^{D-1}\expo{\pi\ii j^2/D} \; \expo{2\pi\ii j k/D}=\frac{1}{D}\sum_{j=0}^{D-1}\expo{\pi\ii(j^2+2 k j)/D}.
\end{equation}
Applying the reciprocity formula \eqref{eqn11} to last term in \eqref{eqn16} with $m=1,n=2k,p=D$ (noting that $mp+n=D+2k$ is even) yields 
\begin{align}\label{eqn17}
a_k(1)&=\frac{1}{\sqrt{D}}\ee^{\pi\ii (D-4k^2)/4D} = \frac{\ee^{\pi\ii/4}}{\sqrt{D}}(-1)^D \ee^{-\pi\ii k^2/D} \notag\\
&=\frac{\ee^{\pi\ii/4}}{\sqrt{D}}\omega^{-k^2/2},\text{ since $(-1)^D=1$ for even $D$}.
\end{align}

\textbf{Odd $D$}. The proof is essentially the same as in the even $D$ case, but we explicitly write it below for the sake of completeness. Using a similar argument we rewrite $a_k(1)$ in \eqref{eqn13} with $\theta_j^0$ defined in \eqref{eqn14} as
\begin{align}\label{eqn18}
a_k(1)=\frac{1}{D}\sum_{j=0}^{D-1}\expo{2\pi\ii j^2/D} \; \expo{2\pi\ii j k/D}=\frac{1}{D}\sum_{j=0}^{D-1}\expo{\pi\ii(2j^2+2 k j)/D}.
\end{align}
Applying again the reciprocity formula \eqref{eqn11} to last term in \eqref{eqn18} with $m=2,n=2k,p=D$ (noting that $mp+n=2D+2k$ is even) yields 
\begin{align}\label{eqn19}
a_k(1)&=\frac{1}{\sqrt{D}}\frac{\ee^{\pi\ii(2D-4k^2)/8D}}{\sqrt{2}}
(1+\ee^{-\pi\ii(D+2k)/2})\notag\\
&=\frac{\ee^{\pi\ii/4}}{\sqrt{D}}\omega^{-k^2/4} \left( \frac{1-\ii^{2k+D}}{\sqrt{2}} \right),
\end{align}
where we used $(-\ii)^{2k+D}=-(\ii^{2k+D})$ since $2k+D$ is odd.
This concludes the proof of Lemma~\ref{lma1} and implicitly of Theorem~\ref{thm1}.
\end{proof}

\subsection{Examples\label{sct2D}}
In this section we present some examples that illustrate the behaviour of the solution we provided in Theorem~\ref{thm1}, for various dimensions. First we consider $D=5$ and we plot the absolute values of the $a_k(t)$ coefficients as a function of $t$ in Fig.~\ref{fgr1}. It is easy to see that indeed the basis interpolates between a product basis and a maximally entangled one in a continuous manner. We observe that all coefficients are non-zero for $t>0$ and we believe that this is  probably also the case for all odd $D$'s.
\begin{figure}
\includegraphics[scale=0.45]{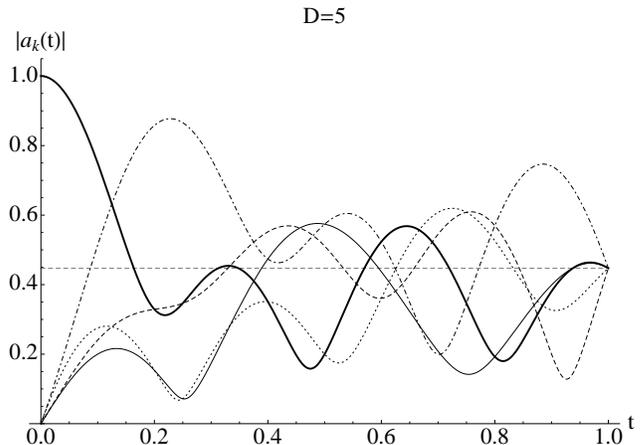}
\caption{The variation of $|a_k(t)|$ [dimensionless] with $t$ for $D=5$. Note how at $t=0$ all coefficients but one are zero, and how at $t=1$ all coefficients are equal in magnitude to $1/\sqrt{5}$, with a continuous variation in between. The horizontal dashed line represents the $1/\sqrt{5}$ constant function. }
\label{fgr1}
\end{figure}

In Fig.~\ref{fgr2} we perform the same analysis as above, but now for $D=8$. We observed that some coefficients vanish for some values of $t$, which seems to be true in general for even $D$.
\begin{figure}
\includegraphics[scale=0.45]{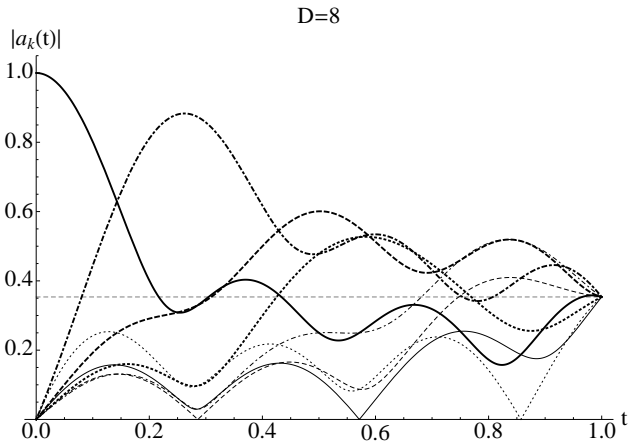}
\caption{The variation of $|a_k(t)|$ [dimensionless] with $t$ for $D=8$. Again note how at $t=0$ all coefficients but one are zero, and how at $t=1$ all coefficients are  equal in magnitude to $1/\sqrt{8}$, with a continuous variation in between. The horizontal dashed line represents the $1/\sqrt{8}$ constant function. }
\label{fgr2}
\end{figure}

In Fig.~\ref{fgr3} we plot the entropy of entanglement of the states in the basis as a function of $t$ for dimensions $D=2,3,5,8$ and $100$.
We see how the entanglement varies continuously but not monotonically between 0 and 1.
\begin{figure}
\includegraphics[scale=0.45]{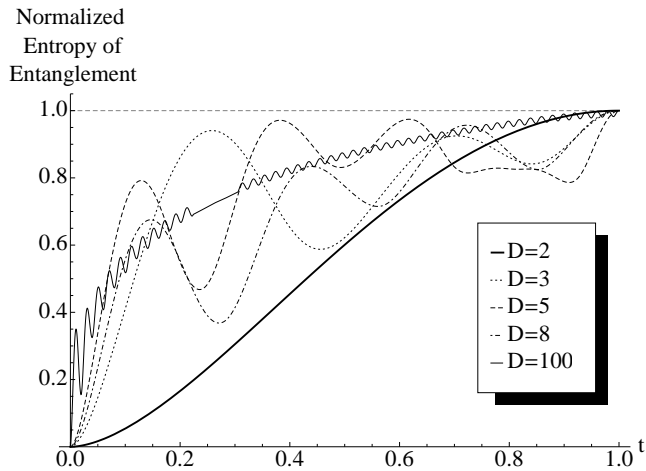}
\caption{The normalized entropy of entanglement [dimensionless, the logarithm is taken base $D$] as a function of $t$ for various dimensions. Note that the variation is not monotonic (except for $D=2$), although for large $D$ the oscillations tend to be smoothed out.}
\label{fgr3}
\end{figure}

Finally in Fig.~\ref{fgr4} we display a parametric plot  of the variation of the second Schmidt coefficient $a_1(t)$ in the complex plane as $t$ is varied from $0$ to $1$ for $D=51$, so that the reader can get an idea of how the coefficients defined in \eqref{eqn13} look in general. The other coefficients $a_k$ look similar. 

\begin{figure}
\includegraphics[scale=0.45]{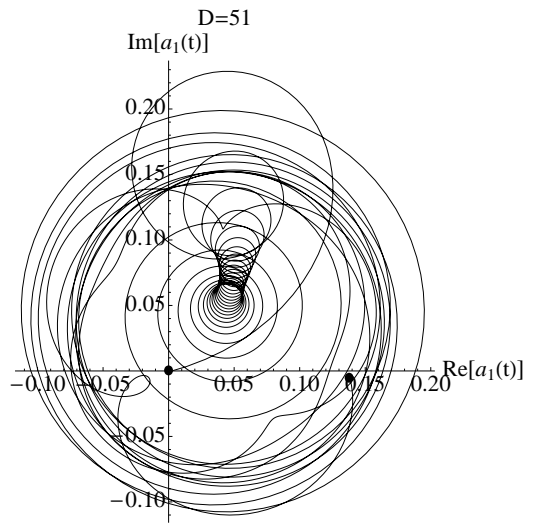}
\caption{Parametric plot of $a_1(t)$ [dimensionless] in the complex plane as $t$ is varied from $0$ to $1$. Note that $a_1(0)=0$ and $a_1(1)=\frac{1-\ii}{\sqrt{2\cdot 51}}\ee^{\pi\ii(\frac{1}{4}-\frac{1}{2\cdot 51})}$, the value provided by Lemma~\ref{lma1}. The starting point $t=0$ and the ending point $t=1$ are marked by solid disks.}
\label{fgr4}
\end{figure}

\section{Construction based on Graph States\label{sct3}}
\subsection{Explicit solution\label{sct3A}}
We provide below another solution to the problem that uses qudit graph states. 
Again having in mind a bipartite Hilbert space $\HC\otimes\HC$, both local spaces having dimension $D$, we define a one-qudit state
\begin{equation}\label{eqn20}
\ket{+}:=\frac{1}{\sqrt{D}}\sum_{k=0}^{D-1}\ket{k}.
\end{equation}
It is easy to see that the collection of $D$ states
\begin{equation}\label{eqn21}
\ket{\overline m}:= Z^m\ket{+},\quad {m=0,\ldots, D-1}
\end{equation}
defines an orthonormal basis of $\HC$ (also known as the Fourier basis), $\ip{\overline m}{\overline n} = \delta_{mn}$, where 
\begin{equation}\label{eqn22}
Z:=\sum_{k=0}^{D-1}\omega^k\dyad{k}{k},
\end{equation}
with  
$\omega=\expo{2\pi\ii/D}
$ being the $D$-th root of unity. It then follows at once that the collection of $D^2$ states
\begin{equation}\label{eqn23}
\ket{\overline m}\ket{\overline n}=(Z^m\otimes Z^n)\ket{+}\ket{+},
\quad{m,n=0,\ldots, D-1}
\end{equation}
defines an orthonormal product basis of the bipartite Hilbert space $\HC\otimes\HC$. 

Next we define the generalized controlled-Phase gate as
\begin{equation}\label{eqn24}
\CP:=\sum_{k=0}^{D-1} \dyad{k}{k} \otimes Z^k=\sum_{j,k=0}^{D-1}\omega^{jk}\dyad{j}{j}\otimes\dyad{k}{k}
\end{equation}
and note that $\CP$ is a unitary operator that commutes with $Z^m\otimes Z^n$, for all $m,n=0,\ldots,D-1$. The state
\begin{equation}\label{eqn25}
\ket{G}:=\CP\ket{+}\ket{+}=\frac{1}{D}\sum_{j,k=0}^{D-1}\omega^{jk}\ket{j}\ket{k}
\end{equation}
is an example of a two-qudit \emph{graph state} and it is not hard to see that $\ket{G}$ is maximally entangled. Then the collection of $D^2$ states
\begin{align}\label{eqn26}
(&Z^m\otimes Z^n)\ket{G}=(Z^m\otimes Z^n) \CP\ket{+}\ket{+}\notag\\
&=\CP (Z^m\otimes Z^n)\ket{+}\ket{+}, \quad{m,n=0,\ldots, D-1}
\end{align}
defines an orthonormal basis of the bipartite Hilbert space $\HC\otimes\HC$, which we call a \emph{graph basis}. Since $Z^m\otimes Z^n$ are local unitaries and $\ket{G}$ is a maximally entangled state, then all the other graph basis states must also be maximally entangled. For more details about graph states of arbitrary dimension see \cite{quantph.0602096, PhysRevA.78.042303, PhysRevA.81.032326}.

We now have all the tools to construct a continuous interpolating family of equientangled bases, as summarized by the Theorem below.
\begin{theorem}\label{thm2}
The collection of $D^2$ normalized states
\begin{align}\label{eqn27}
\ket{G_{m,n}(t)}&=(Z^m\otimes Z^n) \CP(t)\ket{+}\ket{+},\\
m,n&=0,\ldots, D-1\notag,
\end{align}
indexed by a real parameter $t\in[0,1]$ where 
\begin{equation}\label{eqn28}
\CP(t)=\sum_{j,k}\omega^{jkt}\dyad{j}{j}\otimes\dyad{k}{k}
\end{equation}
defines a family of equientangled bases that continuously interpolates between a product basis at $t=0$ and a maximally entangled basis at $t=1$. 
\end{theorem}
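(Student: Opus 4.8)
The plan is to reduce the statement to two elementary structural observations about $\CP(t)$ and then invoke the invariance of Schmidt coefficients under local unitaries together with the continuity of matrix eigenvalues; in particular, no analogue of the Gauss-sum computation of Lemma~\ref{lma1} will be needed. First I would note that $\CP(t)=\sum_{j,k}\omega^{jkt}\dyad{j}{j}\otimes\dyad{k}{k}$ is diagonal in the computational product basis with entries $\omega^{jkt}=\ee^{2\pi\ii jkt/D}$ of unit modulus, hence a unitary operator; and since $Z$ of \eqref{eqn22} is likewise diagonal in the computational basis, $\CP(t)$ commutes with $Z^m\otimes Z^n$ for every real $t$ and all $m,n$, just as the gate $\CP=\CP(1)$ does. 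It follows that $\ket{G_{m,n}(t)}=(Z^m\otimes Z^n)\CP(t)\ket{+}\ket{+}=\CP(t)\bigl(Z^m\otimes Z^n\bigr)\ket{+}\ket{+}=\CP(t)\ket{\overline m}\ket{\overline n}$; since $\{\ket{\overline m}\ket{\overline n}\}_{m,n=0}^{D-1}$ is the orthonormal product basis of \eqref{eqn23} and $\CP(t)$ is unitary, the family $\{\ket{G_{m,n}(t)}\}$ is an orthonormal basis for every fixed $t$.

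For the equientanglement property I would set $\ket{G_{0,0}(t)}=\CP(t)\ket{+}\ket{+}$ and observe, from the identity above, that every member of the family is $\ket{G_{m,n}(t)}=(Z^m\otimes Z^n)\ket{G_{0,0}(t)}$, i.e.\ it is obtained from the single state $\ket{G_{0,0}(t)}$ by the local unitary $Z^m\otimes Z^n$. Local unitaries do not change the Schmidt coefficients, so all $D^2$ states share the Schmidt spectrum of $\ket{G_{0,0}(t)}$, and hence any pure-state entanglement measure takes the same value on every state of the basis.

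It remains to establish continuity in $t$ and identify the endpoints. From $\ket{G_{0,0}(t)}=\tfrac1D\sum_{j,k}\omega^{jkt}\ket{j}\ket{k}$ the reduced density operator on the first factor has matrix elements $(\rho_1(t))_{j,j'}=\tfrac1{D^2}\sum_{k=0}^{D-1}\omega^{(j-j')kt}$, a circulant matrix whose entries --- and therefore whose eigenvalues, which are the squared Schmidt coefficients --- depend continuously (indeed analytically) on $t$; since any pure-state entanglement measure is a continuous function of the Schmidt coefficients (cf.\ \cite{PhysRevLett.83.1046}), the entanglement of the basis varies continuously with $t$. Finally, $\CP(0)=I$ gives $\ket{G_{m,n}(0)}=\ket{\overline m}\ket{\overline n}$, the product basis \eqref{eqn23}, whereas $\CP(1)=\CP$ gives $\ket{G_{m,n}(1)}=(Z^m\otimes Z^n)\ket{G}$, the maximally entangled graph basis \eqref{eqn26}. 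I do not anticipate a real obstacle in this argument; the only step that deserves an explicit line is the verification that $\CP(t)$ is unitary and commutes with $Z^m\otimes Z^n$ for every real $t$, since this one observation simultaneously yields both the orthonormality and the equientanglement of the family.
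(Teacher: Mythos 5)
Your proposal is correct and follows essentially the same route as the paper's proof: the commutation of $\CP(t)$ with $Z^m\otimes Z^n$ yields orthonormality, the local-unitary relation to $\ket{G_{0,0}(t)}$ yields equientanglement, and the endpoints $\CP(0)=I\otimes I$, $\CP(1)=\CP$ identify the product and maximally entangled bases, with continuity following from continuity of the Schmidt coefficients in $t$. (One cosmetic remark: the reduced density matrix you write down is Toeplitz rather than circulant for non-integer $t$, but your argument only uses continuity of its entries and eigenvalues, so nothing is affected.)
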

\begin{proof}
We make the crucial observation that $\CP(t)$ commutes with $Z^m\otimes Z^n$ for all $m,n=0,\ldots, D-1$ and all $t \in [0,1]$ which implies that $\{\ket{G_{m,n}(t)}\}_{m,n=0}^{D-1}$ defines an orthonormal basis since it differs from the orthonormal basis in \eqref{eqn23} only by the unitary operator $\CP(t)$. All states in the basis are equally entangled, and moreover, share the same set of Schmidt coefficients since any two basis states are equivalent up to local unitaries of the form $Z^m\otimes Z^n$.

Finally note that \mbox{$\CP(t=0)=I\otimes I$} and \mbox{$\CP(t=1)=\CP$} (defined in \eqref{eqn24}), and therefore at $t=0$ the basis is product, see \eqref{eqn23}, and at $t=1$ the basis is maximally entangled, see \eqref{eqn26}. The operator $\CP(t)$ can be viewed as a controlled-Phase gate whose ``entangling strength" can be tuned continuously. The Schmidt coefficients of the states in the basis vary continuously with $t$ and hence the entanglement also varies continuously with $t$, regardless of which entanglement measure one uses (see the remarks following Theorem~\ref{thm1}).
\end{proof}
Our construction above can be expressed in the framework described in the last two paragraphs of Sec. II of \cite{PhysRevA.73.012329}, by setting $U_m=Z^m$ and $V_n=Z^n$.

Next we prove that the Schmidt coefficients of the basis states in Theorem \ref{thm2} are all non-zero for any $t>0$, so all bases consist of full Schmidt rank states whenever $t>0$.

\begin{lemma}\label{lma2}
The equientangled family of bases $\{\ket{G_{m,n}(t)}\}_{m,n=0}^{D-1}$ defined in Theorem~\ref{thm2} consists of full Schmidt rank states, for any $0<t\leqslant1$.
\end{lemma}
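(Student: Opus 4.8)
The plan is to use the fact, already noted in the proof of Theorem~\ref{thm2}, that all $D^2$ states $\ket{G_{m,n}(t)}$ are related by the local unitaries $Z^m\otimes Z^n$ and therefore share one common set of Schmidt coefficients; it thus suffices to determine the Schmidt rank of the single state $\ket{G_{0,0}(t)}=\CP(t)\ket{+}\ket{+}$. Writing it out gives $\ket{G_{0,0}(t)}=\frac{1}{D}\sum_{j,k=0}^{D-1}\omega^{jkt}\ket{j}\ket{k}$, so the Schmidt rank equals the rank of the $D\times D$ coefficient matrix $M(t)$ with entries $M_{jk}(t)=\frac1D\,\omega^{jkt}$, and the task reduces to showing that $M(t)$ is invertible for every $t\in(0,1]$.

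The key step is to recognize $M(t)$, up to the harmless scalar $1/D$, as a Vandermonde matrix. Setting $x_j:=\omega^{jt}=\expo{2\pi\ii jt/D}$ for $j=0,\ldots,D-1$, the $j$-th row of $D\,M(t)$ is $(1,x_j,x_j^2,\ldots,x_j^{D-1})$, whence $\det M(t)=D^{-D}\prod_{0\leqslant j<j'\leqslant D-1}(x_{j'}-x_j)$. This determinant is nonzero precisely when the $D$ nodes $x_0,\ldots,x_{D-1}$ are pairwise distinct, so everything comes down to verifying node-distinctness.

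That verification is a short estimate: $x_j=x_{j'}$ holds exactly when $\expo{2\pi\ii t(j-j')/D}=1$, i.e.\ when $t(j-j')/D$ is an integer. For $j\neq j'$ with $j,j'\in\{0,\ldots,D-1\}$ one has $1\leqslant|j-j'|\leqslant D-1$, so for $0<t\leqslant1$ the number $t\,|j-j'|/D$ satisfies $0<t\,|j-j'|/D\leqslant(D-1)/D<1$ and hence is never an integer. Therefore the nodes are distinct, $M(t)$ has full rank $D$, and $\ket{G_{0,0}(t)}$ — together with every $\ket{G_{m,n}(t)}$ — has Schmidt rank $D$ for all $0<t\leqslant1$.

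I do not expect a genuine obstacle here; the only place demanding a little care is the distinctness estimate, where one must make sure the bound also covers the closed endpoint $t=1$ — which it does, since $|j-j'|$ never attains the value $D$.
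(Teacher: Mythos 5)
Your proof is correct and follows essentially the same route as the paper: reduce to $\ket{G_{0,0}(t)}$ via local-unitary equivalence, identify the coefficient matrix as a Vandermonde matrix, and show its determinant is nonzero because $t(j-j')/D$ can never be a nonzero integer for $0<t\leqslant 1$. The only cosmetic difference is that you phrase the conclusion as full rank of the coefficient matrix, whereas the paper phrases it as the product of the Schmidt coefficients $|\det\Omega|^2$ being nonzero.
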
 
\begin{proof}
We will show that the product of the Schmidt coefficients is always non-zero, which implies that no Schmidt coefficient can be zero, whenever $0<t\leqslant 1$.

Let 
\begin{equation}\label{eqn29}
\ket{\psi}=\sum_{j,k=0}^{D-1}\Omega_{jk}\ket{j}\ket{k}
\end{equation}
be an arbitrary normalized pure state in a bipartite Hilbert space $\HC\otimes\HC$ and
let $\{\lambda_k\}$ denote the set of Schmidt coefficients of $\ket{\psi}$ satisfying $\sum_k{\lambda_k}=1$; note that they are equal to the squares of the singular values of the coefficient matrix  $\Omega$ in \eqref{eqn29}. The product of the squares of the singular values is just the product of the eigenvalues of $\Omega\Omega^\dagger$, the latter product being equal to $\det(\Omega\Omega^\dagger)=|\det(\Omega)|^2$, so we conclude that
\begin{equation}\label{eqn30}
\prod_{k=0}^{D-1}\lambda_k=|\det(\Omega)|^2.
\end{equation}

The states $\ket{G_{m,n}(t)}$ of Theorem~\ref{thm2} share the same set of Schmidt coefficients (they are all related by local unitaries) so it suffices to show that the product of the Schmidt coefficients is non-zero only for the state $\ket{G_{0,0}(t)}$. Recall that 
\begin{equation}\label{eqn31}
\ket{G_{0,0}(t)} = \CP(t)\ket{+}\ket{+} = \sum_{j,k}\frac{\omega^{jkt}}{D}\ket{j}\ket{k}.
\end{equation}
Expressing the coefficients $\omega^{jkt}/D$ as a matrix $\Omega(t)$, one can easily see that $D\cdot\Omega(t)$ is a $D\times D$ Vandermonde matrix whose determinant is 
\begin{equation}\label{eqn32}
\det\left[D\cdot\Omega(t)\right] = \prod_{j > k}(\omega^{jt}-\omega^{kt}) = \prod_{j>k}\omega^{kt}\left[\omega^{(j-k)t}-1\right]
\end{equation}
(see p.~29 of \cite{HornJohnson:MatrixAnalysis} for more details on Vandermonde matrices).
For a given $t$, the product above is zero if and only if at least one term is zero, i.e. there must exist integers $j$, $k$, with $0\leqslant k<j\leqslant D-1$, such that
\begin{equation}\label{eqn33}
(j-k)t=n D \Longleftrightarrow t=n \frac{D}{j-k},
\end{equation}
for some positive integer $n\geqslant 0$.
Note that $0<j-k\leqslant D-1$, so $D/(j-k)>1$ and the above equation can never be satisfied for $0<t\leqslant 1$. We have therefore proved that $\det[\Omega(t)]\neq0$ for $0<t\leqslant 1$, which, in the light of \eqref{eqn30}, is equivalent to saying that the product of the Schmidt coefficients is non-zero for $0<t\leqslant 1$, and this concludes the proof of the Lemma.
\end{proof}

For this family of equientangled bases, we do not have an analytic expression for the Schmidt coefficients nor the entropy of entanglement for general $D$ though they can be easily found by numerically diagonalizing the coefficient matrix $\Omega(t)\Omega(t)^\dagger$. Having said that, we derived a simple analytic expression for the \emph{product} of all Schmidt coefficients, see \eqref{eqn30} and \eqref{eqn32}, which is simply related to an entanglement monotone called $G$-concurrence (first introduced in \cite{PhysRevA.71.012318}) which is defined for a pure bipartite state \eqref{eqn29} in terms of its Schmidt coefficients $\{\lambda_k\}$ as
\begin{equation}\label{eqn34}
C_G(\ket{\psi}) = D\left(\prod_{k=0}^{D-1}\lambda_k\right)^{1/D} = D|\det(\Omega)|^{2/D}
\end{equation}
where $\sum_k\lambda_k=1$. The $G$-concurrence is zero whenever at least one Schmidt coefficient is zero and is equal to one if and only if the state is maximally entangled. Unlike the entropy of entanglement, we are able to show two analytical results that are true for all $D$ expressed in Lemmas \ref{lma3} and \ref{lma4}.

\begin{lemma}\label{lma3}
The $G$-concurrence of the equientangled basis states, $\{\ket{G_{m,n}(t)}\}_{m,n=0}^{D-1}$ defined in Theorem~\ref{thm2} is
\begin{equation}\label{eqn35}
C_G(t) = \frac{2^{D-1}}{D} \prod_{r=1}^{D-1}
\left[\sin^2(\pi r t/D)\right]^{(D-r)/D} \text{ for all $m,n,D$.}
\end{equation}
\end{lemma}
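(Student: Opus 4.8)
The plan is to reduce everything to the Vandermonde determinant already displayed in \eqref{eqn32}. By the local-unitary equivalence noted in the proof of Theorem~\ref{thm2}, all the states $\ket{G_{m,n}(t)}$ share the Schmidt spectrum of $\ket{G_{0,0}(t)}$, so the $G$-concurrence is independent of $m,n$ and it suffices to work with the coefficient matrix $\Omega(t)$ whose entries are $\omega^{jkt}/D$. Combining the definition \eqref{eqn34} with \eqref{eqn30} gives $C_G(t)=D\,|\det\Omega(t)|^{2/D}$, and \eqref{eqn32} gives $\det[D\cdot\Omega(t)]=\prod_{0\le k<j\le D-1}(\omega^{jt}-\omega^{kt})$, whence $|\det\Omega(t)|=D^{-D}\prod_{0\le k<j\le D-1}|\omega^{jt}-\omega^{kt}|$.

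First I would evaluate each factor's modulus. Writing $\omega^{jt}-\omega^{kt}=\omega^{kt}\bigl(\omega^{(j-k)t}-1\bigr)$, using $|\omega^{kt}|=1$, and applying the elementary identity $|\ee^{\ii\phi}-1|=2\,|\sin(\phi/2)|$ with $\phi=2\pi(j-k)t/D$, one obtains $|\omega^{jt}-\omega^{kt}|=2\,|\sin(\pi(j-k)t/D)|$. Next I would regroup the double product according to the value of the difference $r:=j-k$: for each $r\in\{1,\ldots,D-1\}$ there are exactly $D-r$ pairs $(j,k)$ with $0\le k<j\le D-1$ and $j-k=r$, so
\[
\prod_{0\le k<j\le D-1}|\omega^{jt}-\omega^{kt}|
=\prod_{r=1}^{D-1}\bigl(2\,|\sin(\pi r t/D)|\bigr)^{D-r}
=2^{D(D-1)/2}\prod_{r=1}^{D-1}|\sin(\pi r t/D)|^{D-r},
\]
using $\sum_{r=1}^{D-1}(D-r)=D(D-1)/2$. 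Hence $|\det\Omega(t)|=2^{D(D-1)/2}D^{-D}\prod_{r=1}^{D-1}|\sin(\pi rt/D)|^{D-r}$.

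Finally I would raise this to the $2/D$ power and multiply by $D$: the numerical prefactor becomes $D\cdot 2^{D-1}\cdot D^{-2}=2^{D-1}/D$, while $|\sin(\pi rt/D)|^{D-r}$ turns into $[\sin^2(\pi rt/D)]^{(D-r)/D}$ --- the absolute value dropping out automatically since only the square appears --- which is precisely \eqref{eqn35}. I do not expect a genuine obstacle: the computation is essentially bookkeeping, and the only place that deserves care is the combinatorial count giving the multiplicity $D-r$ of each difference (and therefore the exponent $D(D-1)/2$ of $2$), together with keeping track of how the overall $2/D$-th power distributes over every factor.
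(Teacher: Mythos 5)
Your proposal is correct and follows essentially the same route as the paper's proof: reduce to $\ket{G_{0,0}(t)}$ via local unitaries, use the Vandermonde determinant \eqref{eqn32} together with \eqref{eqn30} and \eqref{eqn34}, count the $D-r$ pairs with difference $j-k=r$, and convert $|\omega^{rt}-1|$ into the sine factor (the paper does this via $2-2\cos$, you via $|\ee^{\ii\phi}-1|=2|\sin(\phi/2)|$, which is the same identity). The bookkeeping of the prefactor $2^{D-1}/D$ matches the paper's as well.
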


\begin{proof}
Every basis state has the same $G$-concurrence since they all share the same set of Schmidt coefficients (recall that they differ only by local unitaries). Invoking the definition of $G$-concurrence, we have
\begin{align}
C&_G(t)=D|\det(\Omega(t))|^{2/D}
=D\left|\frac{1}{D^D}\det(D\cdot\Omega(t))\right|^{2/D}\label{eqn36}\\
&=\frac{1}{D}\prod_{j>k}\left|\omega^{(j-k)t}-1\right|^{2/D}=\frac{1}{D}\prod_{r=1}^{D-1}\left|\omega^{rt}-1\right|^{\frac{2(D-r)}{D}}\label{eqn37}\\
&=\frac{1}{D}\prod_{r=1}^{D-1}\left[2 - 2\frac{\omega^{rt}+\omega^{-rt}}{2} \right]^{\frac{D-r}{D}}\label{eqn38}\\
&=\frac{1}{D}\prod_{r=1}^{D-1}2^{(D-r)/D}\left[1-\cos(2\pi r t/D)\right]^{\frac{D-r}{D}}\label{eqn39}\\
&=\frac{1}{D}\prod_{r=1}^{D-1}2^{2(D-r)/D}\left[\sin^2(\pi r t/D)\right]^{\frac{D-r}{D}}\label{eqn40}\\
&=\frac{2^{D-1}}{D}\prod_{r=1}^{D-1}\left[\sin^2(\pi r t/D)\right]^{(D-r)/D} \label{eqn41},
\end{align}
where in \eqref{eqn36} we used the fact that $\det(cM)=c^D\det(M)$ for a $D\times D$ arbitrary matrix $M$ and an arbitrary constant $c$. The first equality in \eqref{eqn37} follows at once from \eqref{eqn32}, whereas the second equality in \eqref{eqn37} follows from a simple counting argument in which one replaces $j-k$ by $r$, making sure that the different pairs $(j,k)$, $j>k$, that give rise to the same 
$r$ are counted; for a given $r$ there are $D-r$ such pairs.
\end{proof}

It turns out the $G$-concurrence has the following nice property:
\begin{lemma}\label{lma4}
The $G$-concurrence of the basis states $\{\ket{G_{m,n}(t)}\}_{m,n=0}^{D-1}$ in \eqref{eqn35} is strictly increasing in the open interval $t \in (0,1)$, for all dimensions $D$.
\end{lemma}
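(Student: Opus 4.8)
The plan is to reduce the monotonicity claim to the positivity of a trigonometric sum. Since $\sin^2(\pi rt/D)>0$ for every $r\in\{1,\dots,D-1\}$ whenever $t\in(0,1)$ (each angle lies strictly between $0$ and $\pi$), the product in \eqref{eqn35} is strictly positive on $(0,1)$, so it is equivalent — and more convenient — to show that $\log C_G(t)$ is strictly increasing there. Taking logarithms in \eqref{eqn35} turns the product into the sum $\log C_G(t)=\text{const}+\frac{2}{D}\sum_{r=1}^{D-1}(D-r)\log\sin(\pi rt/D)$, and differentiating term by term yields $\frac{d}{dt}\log C_G(t)=\frac{2\pi}{D^2}\sum_{r=1}^{D-1}r(D-r)\cot(\pi rt/D)$. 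The whole statement then reduces to showing that $S(t):=\sum_{r=1}^{D-1}r(D-r)\cot(\pi rt/D)>0$ for all $t\in(0,1)$.

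The key step I would use for the positivity of $S(t)$ is to exploit the symmetry $r\mapsto D-r$ of the weight $r(D-r)$ and pair the $r$-th term with the $(D-r)$-th term. For $1\le r<D/2$ set $\alpha=\pi rt/D$ and $\beta=\pi(D-r)t/D$; then $\alpha,\beta\in(0,\pi)$ and $\alpha+\beta=\pi t\in(0,\pi)$, so the elementary identity $\cot\alpha+\cot\beta=\sin(\alpha+\beta)/(\sin\alpha\sin\beta)$ exhibits each paired contribution as the positive weight $r(D-r)$ times a strictly positive number. When $D$ is even there is a leftover middle term $r=D/2$, equal to $\tfrac14 D^2\cot(\pi t/2)$, which is positive because $\pi t/2\in(0,\pi/2)$. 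Summing over pairs gives $S(t)>0$, hence $\frac{d}{dt}\log C_G(t)>0$ on $(0,1)$, and therefore $C_G$ is strictly increasing there.

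I expect the only delicate point — and the thing that must be checked carefully rather than merely asserted — is the bookkeeping of the angle ranges: one needs $\alpha+\beta=\pi t<\pi$ (which is precisely why the interval is $(0,1)$ and cannot be enlarged), each individual angle $\pi rt/D\in(0,\pi)$ so that all cotangents are finite and all sines positive, and the even-$D$ middle term handled on its own. Past that, the computation is routine; the real content is that the pairing converts a sum whose individual summands can be negative into a sum of manifestly positive ones, so no quantitative estimate is required.
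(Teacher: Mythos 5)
Your proof is correct, and it takes a genuinely different route at the key step. Both you and the paper make the same reduction: since $C_G(t)>0$ on $(0,1)$, it suffices to show $\frac{\mathrm{d}}{\mathrm{d}t}\log C_G(t)>0$, and your derivative formula agrees with the paper's \eqref{eqn42}. The paper then argues via concavity: it computes the second derivative \eqref{eqn43}, observes it is strictly negative, so the first derivative is strictly decreasing on $(0,1]$, and finally shows that this derivative vanishes at $t=1$ by the $r\mapsto D-r$ antisymmetry of $\cot(\pi r/D)$ (Eq.~\eqref{eqn44}); positivity on $(0,1)$ follows. You instead prove positivity of the cotangent sum pointwise, pairing the $r$ and $D-r$ terms (their weights $r(D-r)$ coincide) and using $\cot\alpha+\cot\beta=\sin(\alpha+\beta)/(\sin\alpha\sin\beta)$ with $\alpha+\beta=\pi t\in(0,\pi)$ and $\alpha,\beta\in(0,\pi)$, together with the separate middle term $\tfrac{D^2}{4}\cot(\pi t/2)>0$ when $D$ is even; all the angle-range checks you flag are indeed satisfied for $r\in\{1,\dots,D-1\}$, $t\in(0,1)$. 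In effect you exploit over the whole interval the same $r\mapsto D-r$ symmetry that the paper invokes only at the endpoint $t=1$. Your version is more elementary (no second derivative needed) and makes transparent exactly why the interval is $(0,1)$: the paired contributions are proportional to $\sin(\pi t)$, which is positive for $t<1$ and vanishes at $t=1$. The paper's version yields a little extra information as a byproduct, namely that $\log C_G$ is strictly concave on $(0,1]$ and that $t=1$ is the critical point where the derivative reaches zero, i.e.\ the maximum of $C_G$ on $[0,1]$. Either argument proves Lemma~\ref{lma4}.
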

\begin{proof}
We prove this by showing that the first derivative of the $C_G(t)$ with respect to $t$ is strictly positive. First note that since $C_G(t)>0$ is a positive function in the interval $t \in (0,1)$. This means showing $\frac{\mathrm{d}}{\mathrm{d} t} C_G(t) > 0$ is equivalent to showing that $\frac{\mathrm{d}}{\mathrm{d} t} \log C_G(t) > 0$. The derivative of the logarithm of \eqref{eqn41} is
\begin{equation}\label{eqn42}
\frac{\mathrm{d}}{\mathrm{d} t} \log C_G(t) = \frac{2\pi}{D^2} \sum_{r=1}^{D-1}r(D-r)
\cot(\pi r t/D),
\end{equation}
where $\cot(\cdot)$ denotes the cotangent function. We differentiate again to get
\begin{equation}\label{eqn43}
\frac{\mathrm{d^2}}{\mathrm{d} t^2}\log C_G(t)=-\frac{2\pi^2}{D^3}\sum_{r=1}^{D-1}r^2(D-r)\frac{1}{\sin^2(\pi r t/D)}.
\end{equation}
Note that the right hand side of \eqref{eqn43} is strictly negative whenever $t>0$, which implies that the first derivative of the logarithm \eqref{eqn42} is a strictly decreasing function of $t$ and hence achieving its minimum value at $t=1$, which is given by
\begin{equation}\label{eqn44}
\frac{\mathrm{d}}{\mathrm{d} t} \log C_G(t)\mid_{t=1} = \frac{2\pi}{D^2}\sum_{r=1}^{D-1}r(D-r) \cot(\pi r/D)=0,
\end{equation}
where the last equality follows from symmetry considerations (terms cancel one by one). We have shown $\frac{\mathrm{d}}{\mathrm{d} t} \log C_G(t) > 0 \Leftrightarrow \frac{\mathrm{d}}{\mathrm{d} t} C_G(t) > 0$ and therefore we conclude that the $G$-concurrence is strictly increasing for $t \in (0,1)$.
\end{proof}

\subsection{Extension to multipartite systems\label{sct3C}}
The construction presented in Theorem~\ref{thm2} can be easily generalized to  multipartite systems of arbitrary dimension. The concept of maximally entangled states is not defined for three parties or more, so in this case, the family continuously interpolates between a product basis and a qudit graph basis. It is still true that for a fixed $t$, all basis states constructed this way have the same entanglement (as quantified by any entanglement measure) since they only differ by local unitaries. 

As a specific example, consider the tripartite GHZ state \mbox{$(\ket{000}+\ket{111})/\sqrt{2}$}. This state is a stabilizer state \cite{NielsenChuang:QuantumComputation} and therefore is local-unitary equivalent \cite{quantph.0111080}  to a graph state $\ket{G}=(\CP_{12} \CP_{23} \CP_{13}) \ket{+}_1 \ket{+}_2 \ket{+}_3$, where the subscripts on $\CP$ indicates which pair of qubits the $\CP$ gate is applied to. By varying the ``strength" of the controlled-Phase gate, one can now construct a family of equally entangled basis for the Hilbert space of 3 qubits that continuously interpolates between a product basis and the GHZ-like graph basis. This GHZ construction can be easily generalized to higher dimensions and also to $n$ parties by using the complete graph given by
\begin{equation}
\label{eqn45}
\left| G_{\mathrm{GHZ}}(t) \right \rangle := \prod_{i=1}^{n-1} \prod_{j>i}^n \CP_{ij}(t) \; \ket{+}^{\otimes n}
\end{equation}
where $\ket{+}$ is defined in \eqref{eqn20} and $\CP(t)$ is defined in \eqref{eqn28}. Finally note that this construction works for any graph state of any dimension, and not just for GHZ-like graph states. Such bases with tunable entanglement may be of use in the study of multipartite entanglement.

\subsection{Examples\label{sct3B}}
In this subsection we perform a similar analysis as the one in Sec.~\ref{sct2D}, so that one can easily compare the behaviour of both solutions.

We consider again a $D=5$ example, for which we plot in Fig.~\ref{fgr5} the square root of the Schmidt coefficients as functions of $t$. It is easy to see that indeed the basis interpolates between a product basis and a maximally entangled one in a continuous manner. As proven in Lemma~\ref{lma2}, all Schmidt coefficients are non-zero for $t>0$.
\begin{figure}
\includegraphics[scale=0.5]{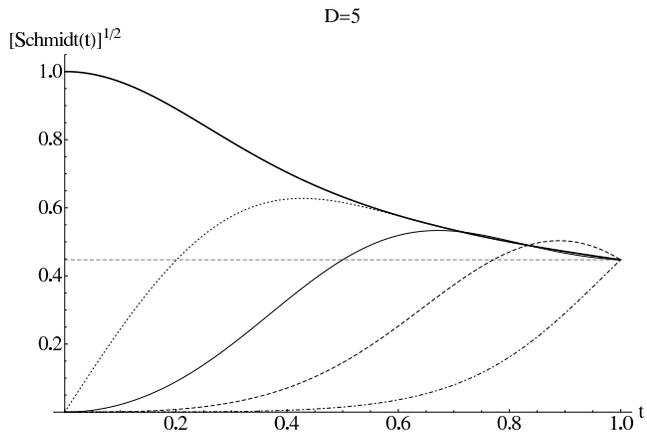}
\caption{The square roots of the Schmidt coefficients [dimensionless] as functions of $t$ for $D=5$. Note how at $t=0$ all coefficients but one are zero, and how at $t=1$ all coefficients are equal in magnitude to $1/\sqrt{5}$, with a continuous variation in between. The horizontal dashed line represents the $1/\sqrt{5}$ constant function. }
\label{fgr5}
\end{figure}

In Fig.~\ref{fgr6} we plot the same quantities for $D=8$. 
\begin{figure}
\includegraphics[scale=0.5]{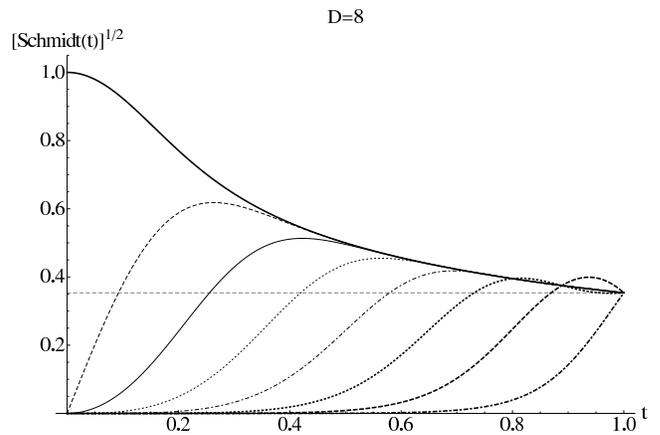}
\caption{The square roots of the Schmidt coefficients [dimensionless] as functions of $t$ for $D=8$. Again note how at $t=0$ all coefficients but one are zero, and how at $t=1$ all coefficients are  equal in magnitude to $1/\sqrt{8}$, with a continuous variation in between. The horizontal dashed line represents the $1/\sqrt{8}$ constant function. }
\label{fgr6}
\end{figure}
Observe how in both examples above the variation of the Schmidt coefficients is not oscillatory, as in the examples of Sec.~\ref{sct2D}.

\begin{figure}
\includegraphics[scale=0.45]{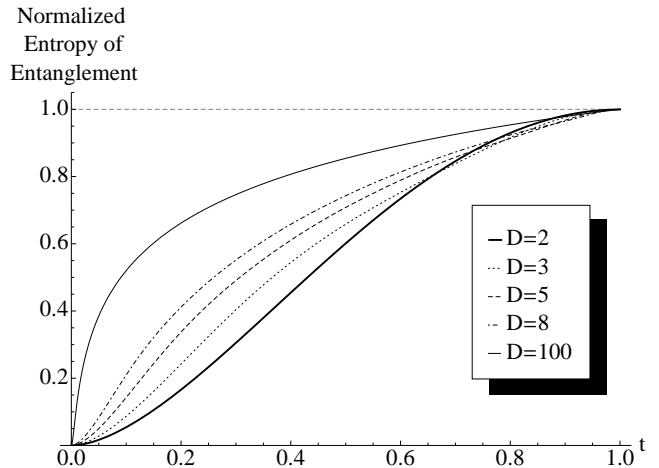}
\caption{The normalized entropy of entanglement [dimensionless, the logarithm is taken base $D$] as a function of $t$ for various dimensions. Note that the variation seems to be monotonically increasing for all $D$, a statement we did not prove.}
\label{fgr7}
\end{figure}

\begin{figure}
\includegraphics[scale=0.45]{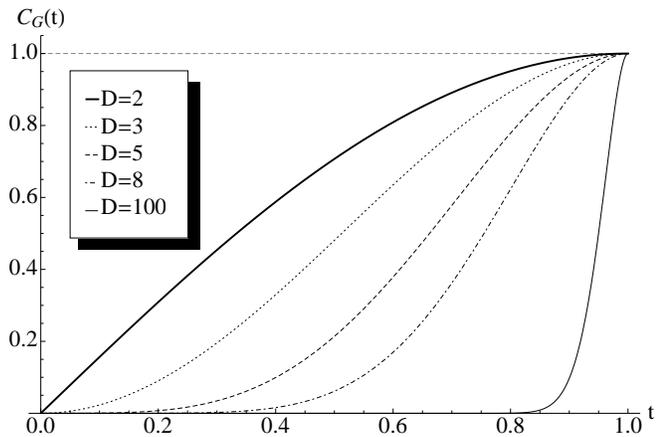}
\caption{The $G$-concurrence [dimensionless] as function of $t$ for various dimensions. The variation is strictly increasing in $t$ for all $D$ as shown in Lemma \ref{lma4}.}
\label{fgr8}
\end{figure}

In Fig.~\ref{fgr7} we plot the entropy of entanglement of the basis states as a function of $t$ for dimensions $D=2,3,5,8$ and $100$. We observe that the entropy of entanglement varies continuously and monotonically between 0 and 1. It is not known if the entropy of entanglement is always strictly increasing for all $D$ although we verified
this by visual inspection for all $D\leqslant 10$. In Fig.~\ref{fgr8}, we plot the $G$-concurrence for the same dimensions. We see how the curves are strictly increasing and this is true for all $D$ as proven in Lemma~\ref{lma4}.

\section{Conclusion\label{sct4}}
We have solved the problem posed in \cite{PhysRevA.73.012329} by providing two families of equientangled bases for two identical qudits for arbitrary dimension $D$. The construction of the first solution is based on quadratic Gauss sums and follows along the lines of \cite{PhysRevA.73.012329}, whereas the second family is constructed using a different method based on qudit graph states.

The first solution based on quadratic Gauss sums has an explicit analytic expression for the Schmidt coefficients that is easy to evaluate since they are just sums with $D$ terms (see \eqref{eqn13} and \eqref{eqn14}). However some Schmidt coefficients can be zero and the entropy of entanglement of the states in the basis varies non-monotonically with $t$ for $D>2$. 

The second solution based on graph states consists entirely of full Schmidt rank states for any $0<t\leqslant 1$ that seem to have an entropy of entanglement that is strictly increasing as $t$ increases. Unfortunately we did not find a simple analytic expression for the Schmidt coefficients, but they can be computed numerically without much difficulty. We found a simple analytic expression for another pure state entanglement measure, the $G$-concurrence, which we proved is strictly increasing as $t$ increases. Finally we remark that one can extend this construction to equally entangled bases of more than two parties that interpolate continuously between a product basis and a graph basis even if the concept of maximally entangled states is not defined for more than two parties. This construction may be of interest in studying multipartite entanglement.

\begin{acknowledgments}
The research described here received support from the National Science Foundation through Grant No. PHY-0757251.
\end{acknowledgments}


%

\end{document}